\newtheorem{lemma}{Lemma}
\newtheorem{theorem}{Theorem}
\newtheorem{definition}{Definition}
\newtheorem{remark}{Remark}
\newcommand{\floor}[1]{\left\lfloor{#1}\right\rfloor}
\newcommand{\amax}[1]{\mathop{\mathrm{arg\,max}}\limits_{#1}} 
\newcommand{\amin}[1]{\mathop{\mathrm{arg\,min}}\limits_{#1}}
\newcommand{\sign}{\mbox{sign}}
\newcommand{\algref}[1]{Algorithm~\textup{\ref{#1}}}
\newcommand{\figref}[1]{Fig.~\textup{\ref{#1}}} 
\newcommand{\lemref}[1]{Lemma~\textup{\ref{#1}}}
\newcommand{\rmkref}[1]{Remark~\textup{\ref{#1}}}
\newcommand{\secref}[1]{Section~\textup{\ref{#1}}}
\newcommand{\tabref}[1]{Table~\textup{\ref{#1}}}
\newcommand{\thmref}[1]{Theorem~\textup{\ref{#1}}}
\def\hat{\widehat}
\def\tilde{\widetilde}
\def\bar{\overline}
\begin{document}

\title{A Semi-Smooth Newton Algorithm for \\
High-Dimensional Nonconvex Sparse Learning}
\author{
Yueyong~Shi,
Jian Huang,
Yuling~Jiao,
and~Qinglong~Yang%
\thanks{
The work of Y. Shi was supported in part by
the National Natural Science Foundation of China
under Grant 11801531, Grant 11701571 and Grant 41572315.
The work of Y. Jiao was supported in part by
the National Science Foundation of China
under Grant 11871474 and Grant 11501579.
The work of Q. Yang was supported in part by
the National Science Foundation of China under Grant 11671311.
\emph{(Corresponding author: Qinglong Yang.)}
}%
\thanks{Y. Shi is with the School of Economics and Management,
China University of Geosciences
and Center for Resources and Environmental Economic Research,
China University of Geosciences, Wuhan 430074, China
(e-mail: syywda@whu.edu.cn).
}%
\thanks{J. Huang is with the Department of Applied Mathematics,
The Hong Kong Polytechnic University, Hong Kong 999077, China
(e-mail: j.huang@polyu.edu.hk).
}%
\thanks{Y. Jiao and Q. Yang are with the School of Statistics and Mathematics,
Zhongnan University of Economics and Law, Wuhan 430073, China
(e-mails: yulingjiaomath@whu.edu.cn; yangqinglong@zuel.edu.cn).} 
}

\markboth{IEEE TRANSACTIONS ON NEURAL NETWORKS AND LEARNING SYSTEMS,
~Vol.~XX, No.~X, August~20XX}%
{Shell \MakeLowercase{\textit{et al.}}: Bare Demo of IEEEtran.cls for IEEE Journals}

\maketitle

\begin{abstract}
The smoothly clipped absolute deviation (SCAD) and  the
minimax concave penalty (MCP) penalized regression models are
two important and widely used nonconvex  sparse learning tools that can handle
variable selection and parameter estimation simultaneously,
and thus have potential applications in various fields such as
mining biological data in high-throughput biomedical studies.
Theoretically, these  two models enjoy the oracle
property even in the high-dimensional settings, where  the number of predictors $p$ may be much larger than the number of  observations  $n$. However,  numerically,
it is quite challenging to  develop fast and stable algorithms due to  their  non-convexity and  non-smoothness. In this paper we develop a  fast   algorithm  for
SCAD and MCP penalized learning  problems.
First, we show that the global minimizers of both models are roots of the
nonsmooth equations.  Then,
a semi-smooth Newton (SSN) algorithm is employed to solve the equations.
We prove that the SSN algorithm converges  locally and superlinearly  to
the Karush-Kuhn-Tucker (KKT) points.
Computational complexity analysis shows  that the cost of the SSN algorithm per iteration is $O(np)$. Combined with the warm-start technique, the SSN algorithm
can be very efficient and accurate.
Simulation studies and a real data example suggest that
our SSN algorithm, with comparable solution accuracy with
the coordinate descent (CD) and
the difference of convex (DC) proximal Newton algorithms,
is more computationally efficient.
\end{abstract}

\begin{IEEEkeywords}
Convergence, MCP, SCAD, semi-smooth Newton (SSN),  warm start.
\end{IEEEkeywords}

\IEEEpeerreviewmaketitle

\section{Introduction}
\label{sec:intro}

\IEEEPARstart{T}{his} paper introduces a fast algorithm
for concavely penalized regression. We focus on the linear
regression model
\begin{equation}\label{model}
 y = X\beta^{\dag} + \varepsilon,
\end{equation}
where $y\in \mathbb{R}^{n}$ is an $n\times 1$ vector of response variables,
$X=(X_1, \ldots, X_p)$ is an $n\times p$
design matrix, $\varepsilon$ is an $n\times 1$ vector of error terms, and
$\beta^{\dag} =(\beta^{\dag}_{1},\ldots ,
\beta^{\dag}_{p})^T \in \mathbb{R}^{p} $ is the vector of underlying
regression coefficients.

Under the sparsity assumption that the number of important predictors is relatively small,
it is natural to consider the estimator that solves the minimization problem
\begin{equation}\label{l0}
 \min_{\beta\in \mathbb{R}^{p}}\|X \beta
 -y\|^{2}_{2}   \quad \textrm{subject to}  \quad \|\beta\|_{0}
 \leq\tau,
\end{equation}
where  $\|\beta\|_0$ denotes the number of nonzero elements of $\beta$ and
$\tau>0$ is a tuning parameter controlling the sparsity level.
However,  the minimization problem
(\ref{l0}) is  NP-hard \cite{NatarajanB1995}, hence it is quite
challenging to design a feasible algorithm for solving it when $p$ is large. Replacing
the $\|\beta\|_{0}$ term in (\ref{l0}) by $\|\beta\|_{1}$,
we get the $\ell_1$ penalized problem or the LASSO \cite{TibshiraniR1996}
\begin{equation}\label{lasso}
 \min_{\beta\in \mathbb{R}^{p}}\|X \beta
 -y\|^{2}_{2} \quad \textrm{subject to}  \quad \|\beta\|_{1} \leq\tau,
\end{equation}
which can be viewed as a convex relaxation of (\ref{l0}).
Numerically, it is convenient to consider the Lagrange form of (\ref{lasso})
\begin{equation}\label{reglasso}
 \min_{\beta \in \mathbb{R}^{p}}  \frac{1}{2}\|X\beta
 -y\|^{2}_{2} + \lambda\|\beta\|_{1},
\end{equation}
which is known as the basis pursuit denoising (BPDN)
in the signal processing literature \cite{ChenS2001},
where $\lambda\geq 0$ is a tuning parameter
that controls the sparsity level of solutions.
Computationally, (\ref{reglasso}) is a convex minimization problem, therefore, several fast algorithms have been proposed for computing its global minimizer,
such as Homotopy or LARS \cite{OsborneM2000,EfronB2004}
and CD algorithms \cite{FuW1998,FriedmanJ2007,WuT2008}.

Theoretically, under certain regularity conditions on the design matrix $X$, such as  the restricted isometry property \cite{CandesE2005}, the strong
irrepresentable condition \cite{MeinshausenN2006,ZhaoP2006} and
the sparsity condition on the regression coefficients, LASSO has attractive
estimation and selection properties.
However, even under these conditions,
the minimizer of (\ref{reglasso}) still suffers from the so-called
LASSO bias, which implies that the
LASSO regularized estimator does not have  the oracle
property.  To remedy
this problem,  \cite{FanJ2001}  proposed using
concave penalties that can reduce bias and still yield sparse solutions.
This leads to the following minimization problem
\begin{equation}\label{regconcave}
 \min_{\beta \in \mathbb{R}^{p}}  \frac{1}{2}\|X\beta
 -y\|^{2}_{2} + \sum_{i=1}^p P(\beta_{i};\lambda, \gamma),
\end{equation}
where $P(\cdot; \lambda, \gamma)$ is a concave penalty function.
Here $\lambda \ge 0$ is the penalty parameter and $\gamma$ is a given parameter that controls the concavity of the penalty.
In this paper,
we focus on concave penalties SCAD \cite{FanJ2001} and MCP \cite{ZhangC2010}.

The SCAD penalty is defined as
\begin{equation}
\label{regscad}
P_{scad}(t;\lambda,\gamma )=\lambda \int_{0}^{t}\min \{1,(\gamma -x/\lambda
)_{+}/(\gamma -1)\}dx, \gamma > 2,
\end{equation}
and the MCP takes the form
\begin{equation}
\label{regmcp}
P_{mcp}(t;\lambda,\gamma )=\lambda \int_{0}^{t}(1-x/(\gamma \lambda ))_{+}dx,
\gamma > 1,
\end{equation}
where $x_{+}$ is the nonnegative part of $x$, i.e., $x_+= x 1_{\{x \ge 0\}}$.
It is noteworthy that
both penalties converge to the $\ell_1$ penalty as
$\gamma \to \infty$, and the MCP converges to the hard-thresholding penalty
as $\gamma \to 1$.
The {MCP} can be easily understood by considering its derivative,
\begin{equation}
\label{dMCPa}
\dot{P}_{mcp}(t;\lambda,\gamma) = \lambda \big(1-|t|/(\gamma \lambda)\big)_{+}\sign(t),
\end{equation}
where $\sign(t)=-1, 0,$ or $1$ if $t < 0, =0$, or $ > 0$.
The MCP provides a continuum of penalties with the $\ell_1$ penalty
at $\gamma = \infty$ and a continuous approximation of
the hard-thresholding penalty as $\gamma \to 1$.

Concavely penalized estimators
have the asymptotic oracle property under appropriate conditions
\cite{FanJ2001,ZhangC2010}.
However, it is quite challenging to solve
\eqref{regconcave} with (\ref{regscad}) and  (\ref{regmcp}) numerically,
since the objective functions to be minimized are both nonconvex and nonsmooth.
Several methods have been proposed to deal with this difficulty.
The first type of the methods can be viewed as special cases of
the MM algorithm \cite{LangeK2000}
or of multi-stage convex relaxation \cite{ZhangT2010},
such as  local quadratic approximation (LQA) \cite{FanJ2001}
and local linear approximation (LLA)  \cite{ZouH2008}.
Such algorithms generate a solution sequence $\{\beta^k\}_{k}$ that can guarantee the convergence of the objective function, but
the convergence property of
the iterated solution sequence $\{\beta^k\}_{k}$  is generally unknown.
Moreover, the cost per iteration of this type of algorithms is
the cost of a LASSO solver.
The second type of the methods include
coordinate descent (CD) type algorithms
\cite{BrehenyP2011,MazumderR2011}.
The best convergence result of  CD algorithms for minimizing \eqref{regconcave}
is that any cluster point  of $\{\beta^k\}_{k}$
must be a stationary point of (\ref{regscad}) and  (\ref{regmcp})
\cite{BrehenyP2011,MazumderR2011}.
As shown in \cite{BrehenyP2011,MazumderR2011},  CD-type algorithms are faster than the first type of algorithms mentioned above,
because their cost per iteration is only $O(np)$.
However,  CD-type algorithms may need
a large number of iterations when high accuracy is pursued,
since their convergence rates are only sublinear or
locally linear \cite{li2018calculus}.

In this paper we develop a local but superlinearly convergent algorithm for minimizing
\eqref{regconcave} with SCAD and MCP.
 The main contributions of this paper are threefold.
First, we  establish that the global minimizers of \eqref{regconcave}
with SCAD (\ref{regscad}) and  MCP (\ref{regmcp})
are roots of the nonsmooth KKT
equations. Conversely, we show that any root
of the KKT equations is at least a global coordinate-wise minimizer and stationary point of  \eqref{regconcave}.
Then we adopt the SSN algorithm
\cite{Kummer:1988,QiSun:1993,ItoK2008}
  to solve the nonsmooth KKT equations.
   Second, we establish  the  local superlinear
  convergence property of  SSN.  Furthermore,
  the computational complexity analysis shows that  the cost of each iteration
  in SSN is at most $O(np),$ which is the same as CD algorithms.
  Hence, for a given $\lambda$ and $\gamma$, the overall cost of using SSN  to find a (local) minimizer of (\ref{regconcave})  is still  $O(np)$,  since SSN always converges after only a few iterations if it is warm started. Thus SSN is possibly one of the fastest and most accurate algorithms for computing the whole solution path of (\ref{regconcave})  by running  SSN repeatedly
  at some given $\{\lambda_t\}_t$ with warm start.
  Third,  we conduct extensive numerical experiments
   to demonstrate the efficiency and accuracy of SSN,
as well as the feasibility of proposed
tuning parameter selection rules.
The comparison results with a CD and a DC Newton-type algorithm
verify the effectiveness of SSN and the tuning parameter selectors.

The remainder of this paper is organized as follows.
In \secref{sec:algorithm},
we describe the SSN algorithm.
In \secref{sec:converge},
we establish  the local superlinear convergence to KKT points of SSN
and analyze its computational complexity.
Implementation details and numerical comparisons
on simulated and real data are given in \secref{sec:example}.
We conclude in \secref{sec:conclusion}
with some comments and suggestions for future work.

\section{Semi-smooth Newton algorithm for penalized regression}
\label{sec:algorithm}

\subsection{Notations and Background on Newton Derivative}

We first introduce the notations used throughout this paper and describe
the concepts and properties of
the Newton derivative  \cite{Kummer:1988,QiSun:1993,ChenX2000,ItoK2008}.

For a  column vector $\beta= (\beta_{1},\beta_{2},...,\beta_{p})^T\in \mathbb{R}^{p}$,
denote its $q$-norm by $\|\beta\|_q = (\sum_{i=1}^{p}|\beta_{i}|^q)^\frac{1}{q}$,
$q\in [1,\infty)$, and denote its $\ell_0$- and $\ell_\infty$- norm by
$\|\beta\|_0=|\{i:\beta_i\neq 0,1\leq i\leq p\}|$ and
$\|\beta\|_{\infty}=\max_{1\leq i\leq p}|\beta_i|$, respectively.
$X^\mathrm{T}$ is the transpose of the feature matrix  $X \in
\mathbb{R}^{n\times p}$, and $\|X\|$ denotes the operator norm of $X$
induced by the vector 2-norm.
The matrix $X$ is assumed to be columnwise normalized,
i.e., $\|X_i\|_2=1$ for $i=1,2,...,p$.
$\textbf{1}$  or $ \textbf{0}$ denote a
column vector or a matrix with elements all 1 or 0. Define $S
=\{1,2,...,p\}$.  For any $A\subseteq S $ with cardinality $|A|$,
denote $\beta_{A}\in \mathbb{R}^{|A|}$( or $X_{A}\in
\mathbb{R}^{|A|\times p})$ as the subvector (or submatrix) whose
entries (or columns) are listed in $A$. And $X_{AB}$  denotes
submatrix of $X$ whose rows and columns are listed in $A$ and $B$
respectively.
$\textrm{supp}(\beta)$ denotes the support of $\beta$,
and $\sign(z)$ denotes the entry-wise sign of a given vector $z$.

Let $F:\mathbb{R}^{m}\rightarrow \mathbb{R}^{l}$ be a nonlinear map.
\cite{Kummer:1988,QiSun:1993,ChenX2000,ItoK2008} generalized the classical  Newton-Raphson algorithm to find a root
  of  $F(z)=\textbf{0}$, where $F$ is not
 Fr\'{e}chet differentiable but only Newton differentiable in the following sense.

\begin{definition}
$F:\mathbb{R}^{m}\rightarrow \mathbb{R}^{l}$ is called Newton
differentiable at $x\in \mathbb{R}^{m}$ if there exists an open
neighborhood $N(x)$ and a family of mappings $D : N(x)\rightarrow
\mathbb{R}^{l\times m}$ such that
$$ \| F(x+h)-F(x)-D(x+h)h\|_{2}= o (\|h\|_{2}) \quad \text{for} \quad \|h\|_{2} \longrightarrow0.$$
The set of mappings $\{ D(z):z\in N(x)\}$ denoted  by $\nabla_{N}F(x)$
is called  the Newton derivative of $F$ at $x$.
\end{definition}

It can be easily seen that $\nabla_{N}F(x)$ coincides with the
Fr\'{e}chet derivative at $x$ if $F$ is continuously Fr\'{e}chet
differentiable.
Let $F_{i}:\mathbb{R}^{m}\rightarrow\mathbb{R}^{1}$ be Newton
differentiable at $x$ with Newton derivative $\nabla_{N}F_{i}(x),$ $i
= 1,2,...,l$,   then $F=(F_{1},F_{2}, ... ,F_{l})^T$ is also Newton
differentiable at $x$ with Newton derivative
\begin{equation}\label{nd2}
\nabla_{N}F(x)=(\nabla_{N}F_{i}(x), \nabla_{N}F_{2}(x),...,
\nabla_{N}F_{l}(x))^T.
\end{equation}
Furthermore, if both $F$ and $H$  are Newton differentiable at $x$ then any linear combination of them is also Newton differentiable at $x$, i.e., for any
$\theta,\mu \in \mathbb{R}^{1} $,
\begin{equation}\label{nd3}
 \nabla_{N}(\theta F +\mu G)(x) =   \theta \nabla_{N}F(x) + \mu  \nabla_{N}G(x).
 \end{equation}
Let $H:\mathbb{R}^{s}\rightarrow \mathbb{R}^{l}$ be Newton differentiable with Newton derivative $\nabla_{N}H$. Let $L\in \mathbb{R}^{s\times m}$ and
define $F(x)=H(Lx+z)$ for any given $z\in\mathbb{R}^{s}$. Then  it is easy to check by definition that the chain rule holds, i.e.,  $F(x)$ is Newton differentiable at $x$ with Newton derivative
\begin{equation}\label{nd4}
\nabla_{N}F(x) = \nabla_{N}H(Lx+z)L.
\end{equation}
In  Lemma  \ref{egN}, we will give two important  thresholding functions
that are  Newton differentiable but not
Fr\'{e}chet differentiable.

\subsection{Optimality Conditions and Semi-smooth Newton Algorithm}
In this subsection, we give a necessary condition for the global minimizers
of \eqref{regconcave} with the SCAD (\ref{regscad}) or the MCP (\ref{regmcp}) penalty.
Specifically, we show that the global minimizers satisfy a set of KKT equations, which are nonsmooth but are Newton differentiable.
Then we apply the semi-smooth Newton algorithm to solve these equations.

Now we derive the optimality conditions of the minimizers of \eqref{regconcave},
with the penalty function  $P(z;\lambda,\gamma)$ being  $P_{scad}(z;\lambda,\gamma)$
or  $P_{mcp}(z;\lambda,\gamma)$.

For a given  $t\in \mathbb{R}^{1}$, let
\begin{equation}
\label{TO1}
 T(t;\lambda, \gamma)  = \amin{z \in \mathbb{R}^1}
 \frac{1}{2}(z-t)^{2} + P(z;\lambda,\gamma)
\end{equation}
be the thresholding functions corresponding to $P(z;\lambda,\gamma)$,
which have closed forms
for both SCAD and MCP penalties
\cite{BrehenyP2011,MazumderR2011}.
\begin{lemma}\label{thexp}
Let $T(t;\lambda, \gamma)$ be defined in \eqref{TO1}.
Then for $P_{mcp}(z;\lambda,\gamma)$ and $P_{scad}(z;\lambda,\gamma)$,
it follows that
\begin{equation}\label{thmcp}
T_{mcp}(t;\lambda,\gamma)=
\left\{
\begin{aligned}
   &\frac{\mathcal{S}(t;\lambda)}{1- 1/\gamma}, \quad if \quad |t|\leq\gamma\lambda,\\
   &t, \qquad  \quad \quad if \quad|t|>\gamma\lambda.
\end{aligned}
\right.
\end{equation}
and
\begin{equation}\label{thscad}
T_{scad}(t;\lambda,\gamma)=
\left\{
\begin{aligned}
   &\mathcal{S}(t;\lambda), &\quad if \quad |t| \leq 2\lambda,\\
   &\frac{\mathcal{S}(t;\lambda\gamma/(\gamma-1))}{1-1/(\gamma-1)}, &\quad if \quad 2\lambda<|t| \leq \gamma\lambda,\\
   &t, &\quad if \quad |t|>\gamma\lambda.
\end{aligned}
\right.
\end{equation}
respectively, where the scalar function $\mathcal{S}(t;\lambda)= \max\{|t|-\lambda,0\}\sign(t) $ is the soft-thresholding function \cite{DonohoD1995}.
\end{lemma}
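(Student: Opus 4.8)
The plan is to reduce the computation of \eqref{TO1} to a scalar first-order optimality condition and then invert that condition region by region. The first step is to observe that for $\gamma>1$ (MCP) and $\gamma>2$ (SCAD) the penalty $p(\cdot;\lambda,\gamma)$ is weakly convex with modulus $1/\gamma$, resp.\ $1/(\gamma-1)$ — on each smooth piece $\ddot{p}\ge-1/\gamma$, resp.\ $\ddot{p}\ge-1/(\gamma-1)$ — so the objective $g(z):=\tfrac12(z-t)^2+p(z;\lambda,\gamma)$ has a derivative that is strictly increasing on each piece, is continuous at the breakpoints $\pm\lambda,\pm\gamma\lambda$, and jumps upward at $0$; hence $g$ is strictly convex on $\mathbb{R}$. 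Consequently the minimizer in \eqref{TO1} is unique and $z^\star=T(t;\lambda,\gamma)$ is characterized by $0\in\partial g(z^\star)=z^\star-t+\partial p(z^\star;\lambda,\gamma)$, i.e.\ by the inclusion $t\in z^\star+\partial p(z^\star;\lambda,\gamma)$.

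Next I would record the subdifferential of the penalty. For $z\neq0$ both penalties are continuously differentiable, with $\dot{p}_{mcp}(z;\lambda,\gamma)=\lambda(1-|z|/(\gamma\lambda))_{+}\sign(z)$, while $\dot{p}_{scad}(z;\lambda,\gamma)$ equals $\lambda\sign(z)$ on $0<|z|\le\lambda$, equals $(\gamma\lambda-|z|)\sign(z)/(\gamma-1)$ on $\lambda<|z|\le\gamma\lambda$, and equals $0$ on $|z|>\gamma\lambda$; at $z=0$ we have $\partial p(0;\lambda,\gamma)=[-\lambda,\lambda]$ for both. Thus $z\mapsto z+\partial p(z;\lambda,\gamma)$ is a monotone, piecewise-linear set-valued map, and solving the inclusion $t\in z^\star+\partial p(z^\star;\lambda,\gamma)$ amounts to inverting it branch by branch.

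Then I would carry out the inversion by cases on the location of $z^\star$. For the MCP: if $z^\star=0$ the inclusion reads $t\in[-\lambda,\lambda]$; if $0<|z^\star|<\gamma\lambda$ then $t=(1-1/\gamma)z^\star+\lambda\sign(z^\star)$, which gives $z^\star=\mathcal{S}(t;\lambda)/(1-1/\gamma)$ and, after checking that this candidate indeed satisfies $0<|z^\star|<\gamma\lambda$, is valid exactly when $\lambda<|t|<\gamma\lambda$; if $|z^\star|\ge\gamma\lambda$ then $\dot{p}_{mcp}(z^\star)=0$, so $t=z^\star$, valid when $|t|\ge\gamma\lambda$. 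Merging the first two branches (using $\mathcal{S}(t;\lambda)=0$ on $|t|\le\lambda$) and verifying agreement at $|t|=\gamma\lambda$ yields \eqref{thmcp}. The SCAD case is entirely analogous but has one extra interior region coming from the two affine pieces of $\dot{p}_{scad}$: on $0<|z^\star|\le\lambda$ the relation $t=z^\star+\lambda\sign(z^\star)$ gives the soft-threshold branch $\mathcal{S}(t;\lambda)$, active for $|t|\le2\lambda$; on $\lambda<|z^\star|\le\gamma\lambda$ the relation $t=(1-1/(\gamma-1))z^\star+(\gamma\lambda/(\gamma-1))\sign(z^\star)$ inverts to $\mathcal{S}(t;\lambda\gamma/(\gamma-1))/(1-1/(\gamma-1))$, active on $2\lambda<|t|\le\gamma\lambda$ (here the threshold $2\lambda$ appears precisely because $\gamma>2$ forces $2\lambda>\gamma\lambda/(\gamma-1)$); and $|z^\star|>\gamma\lambda$ again gives $t=z^\star$. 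This is \eqref{thscad}.

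The computation is essentially routine once strict convexity has reduced the problem to inverting a monotone piecewise-linear map. The main obstacle is the bookkeeping of the piecewise regions: one must verify, for each branch, that the candidate formula for $z^\star$ actually lands in the region on which the corresponding piece of $\partial p$ was used (the self-consistency of the branch), handle the kink of $p$ at $0$ through the subdifferential $[-\lambda,\lambda]$ rather than a derivative, and confirm that adjacent branches agree at the breakpoints $|t|\in\{\lambda,2\lambda,\gamma\lambda\}$, so that the resulting thresholding function is continuous and single-valued.
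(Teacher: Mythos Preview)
Your argument is correct and is the standard derivation of these thresholding formulas. The paper does not actually prove this lemma: its proof consists solely of the citation ``See \cite{BrehenyP2011} and \cite{MazumderR2011},'' so your self-contained treatment via strict convexity of $z\mapsto\tfrac12(z-t)^2+p(z;\lambda,\gamma)$ and branch-by-branch inversion of the monotone map $z\mapsto z+\partial p(z;\lambda,\gamma)$ is precisely what one finds in those references.
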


\begin{proof}
See Appendix   A.
\end{proof}

The following result derives the (nonsmooth) KKT equations for  the  global minimizers of \eqref{regconcave}. This result is the  basis of the SSN algorithm.

\begin{theorem}\label{th1}
Let $\hat{\beta}$ be a global minimizer of \eqref{regconcave}.
Then there exists $\hat{d} \in \mathbb{R}^p$  such that the following optimality conditions hold:
\begin{eqnarray}
 \hat{d} &=& \tilde{y} - G\hat{\beta} ,  \label{K1} \\
\hat{\beta} &=& \mathbb{T}(\hat{\beta} + \hat{d};\lambda, \gamma), \label{K2}
\end{eqnarray}
where $G = X^{T}X,$  $\tilde{y} = X^{T}y$,
and $\mathbb{T}(z;\lambda, \gamma)$ is the component-wise
thresholding operator of \eqref{TO1} for a given vector $z\in \mathbb{R}^p$.
Conversely, if there exists  $(\hat{\beta}, \hat{d})$
satisfying \eqref{K1} and \eqref{K2},
then $\hat{\beta}$ is a  stationary
point of \eqref{regconcave}.
\end{theorem}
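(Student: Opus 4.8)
The plan is to exploit the separability of the penalty $\sum_i p(\beta_i;\lambda,\gamma)$ and the columnwise normalization $\|X_i\|_2=1$ to decouple \eqref{regconcave} coordinate by coordinate, so that the optimality condition on each $\beta_j$ becomes a one-dimensional problem solved exactly by the scalar thresholding map $T(\cdot;\lambda,\gamma)$ of \eqref{TO1}. For the forward implication I would take a global minimizer $\hat\beta$ of \eqref{regconcave} and set $\hat d:=\tilde y-G\hat\beta$, which is exactly \eqref{K1}. Fixing a coordinate $j$ and restricting the objective to the line $\{\hat\beta+te_j:t\in\mathbb{R}\}$, global optimality forces $\hat\beta_j$ to minimize $z\mapsto\tfrac12\|X\beta-y\|_2^2+p(z;\lambda,\gamma)$ with all other coordinates frozen at $\hat\beta_i$. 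Since $\|X_j\|_2=1$, the quadratic part is a parabola in $z$ with leading coefficient $\tfrac12$, and completing the square rewrites this restriction, up to an additive constant, as $\tfrac12\big(z-(\hat\beta_j+\hat d_j)\big)^2+p(z;\lambda,\gamma)$; here one uses $G_{jj}=\|X_j\|_2^2=1$ to identify $X_j^{\mathrm T}\big(y-\sum_{i\neq j}X_i\hat\beta_i\big)=\hat\beta_j+\hat d_j$. By the definition \eqref{TO1} of $T$, and since for $\gamma>2$ (SCAD) and $\gamma>1$ (MCP) the scalar function $\tfrac12(z-t)^2+p(z;\lambda,\gamma)$ is strictly convex so that the closed forms \eqref{thmcp}, \eqref{thscad} of \lemref{thexp} are single-valued, this yields $\hat\beta_j=T(\hat\beta_j+\hat d_j;\lambda,\gamma)$, and stacking the $p$ identities gives \eqref{K2}.

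For the converse I would read this computation in reverse: given $(\hat\beta,\hat d)$ satisfying \eqref{K1}--\eqref{K2}, relation \eqref{K1} gives $\hat\beta_j+\hat d_j=X_j^{\mathrm T}\big(y-\sum_{i\neq j}X_i\hat\beta_i\big)$, so \eqref{K2} says precisely that each $\hat\beta_j$ is the global minimizer of the strictly convex scalar objective $\tfrac12\big(z-(\hat\beta_j+\hat d_j)\big)^2+p(z;\lambda,\gamma)$; in particular $\hat\beta$ is a global coordinate-wise minimizer of \eqref{regconcave}. Writing the first-order condition of this scalar problem --- ordinary stationarity $-\hat d_j+\dot p(\hat\beta_j;\lambda,\gamma)=0$ when $\hat\beta_j\neq 0$, and $\hat d_j\in\partial p(0;\lambda,\gamma)=[-\lambda,\lambda]$ when $\hat\beta_j=0$ --- shows $\hat d_j\in\partial p(\hat\beta_j;\lambda,\gamma)$ for every $j$. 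Since $\hat d=\tilde y-G\hat\beta=X^{\mathrm T}(y-X\hat\beta)$ is minus the gradient of $\tfrac12\|X\beta-y\|_2^2$ at $\hat\beta$, collecting coordinates yields $\mathbf 0\in X^{\mathrm T}(X\hat\beta-y)+\partial\big(\sum_{i=1}^{p}p(\cdot;\lambda,\gamma)\big)(\hat\beta)$, i.e., $\hat\beta$ is a stationary point of \eqref{regconcave}.

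I expect the main obstacle to be the converse step of translating \eqref{K2} into this subdifferential inclusion: one must handle the non-differentiability of $p$ at $0$ cleanly (the kink contributes $[-\lambda,\lambda]$, both penalties being $C^1$ away from the origin) and be explicit about the notion of stationarity used --- namely that $\mathbf 0$ lies in the limiting (equivalently, here, Clarke) subdifferential of \eqref{regconcave} at $\hat\beta$, which is legitimate because the least-squares term is $C^1$ and the sum rule applies to a smooth term plus a separable penalty. A secondary technical point, needed already in the forward direction, is the single-valuedness of $T(\cdot;\lambda,\gamma)$, which is what makes ``minimizer of the scalar subproblem'' equivalent to ``equals $T(\cdot)$''; this is exactly where $\gamma>2$ and $\gamma>1$ enter, through strict convexity of $\tfrac12(z-t)^2+p(z;\lambda,\gamma)$ on each half-line together with the upward jump of its derivative across $0$.
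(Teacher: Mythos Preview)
Your argument is correct and, for the forward direction, essentially identical to the paper's: both freeze all but one coordinate, use $\|X_j\|_2=1$ to complete the square and recognize the resulting one-dimensional problem as the defining problem \eqref{TO1} for $T$, and invoke \lemref{thexp}.

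The converse differs in a small but real way. The paper, like you, first observes that \eqref{K1}--\eqref{K2} force $\hat\beta$ to be a coordinate-wise minimizer of \eqref{regconcave}; it then simply cites Lemma~3.1 of Tseng (2001) to conclude that a coordinate-wise minimizer of a smooth-plus-separable objective is a stationary point. You instead unpack this implication by hand: from the scalar first-order condition $0\in -\hat d_j+\partial p(\hat\beta_j;\lambda,\gamma)$ you assemble the inclusion $\mathbf 0\in X^{\mathrm T}(X\hat\beta-y)+\partial\big(\sum_i p(\cdot;\lambda,\gamma)\big)(\hat\beta)$ directly. Your route is more self-contained and makes explicit where the kink at zero enters (via $\partial p(0;\lambda,\gamma)=[-\lambda,\lambda]$), at the cost of spelling out the subdifferential calculus; the paper's route is shorter but relies on an external reference. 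Both are valid, and your remarks on single-valuedness of $T$ under $\gamma>2$ (SCAD) and $\gamma>1$ (MCP) correctly identify the only place where those constraints are used.
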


\begin{proof}
See Appendix   B.
\end{proof}

Let
\begin{equation}
\label{Fdef1}
F(\beta;d)=
\begin{bmatrix}
  F_{1}(\beta;d) \\
  F_{2}(\beta;d) \\
\end{bmatrix}
: \mathbb{R}^p \times \mathbb{R}^p \to  \mathbb{R}^{2p},
\end{equation}
where
$F_{1}(\beta;d):=\beta  - \mathbb{T}(\hat{\beta} + \hat{d};\lambda, \gamma),$
and
$F_{2}(\beta;d):= G \beta + d -\tilde{y}.$
For simplicity, we refer to $F$ as a KKT function.
By Theorem \ref{th1},  the global minimizers of  \eqref{regconcave}
are roots of $F(\beta;d)$.
These roots are the stationary points of \eqref{regconcave}.
The thresholding operators corresponding to concave penalties
including SCAD and  MCP are not differentiable,
which in turn results in the non-differentiability of $F$.
This makes it difficult to find the roots of $F$.
So we resort to the SSN method
\cite{Kummer:1988,QiSun:1993,ChenX2000,ItoK2008}.

Let $z=(\beta; d)$.
At the $k${th} iteration, the SSN method
for finding the roots of $F(z)=0$ consists of two steps.

\begin{enumerate}
\item[(1)] Solve $H^{k}\delta^{k}=-F(z^{k}) $ for $\delta^{k}$, where
$H^{k}$ is an element of $\nabla_{N}F(z^{k})$.

\item[(2)] Update $z^{k+1} = z^{k} + \delta^{k}$, set $k \leftarrow k+1$
and go to step (1).
\end{enumerate}

This has the same form as the classical Newton method, except that
here we use an element of $\nabla_{N}F(z^{k})$ in step (1).
Indeed, the key to the success of this method is to find a suitable
and invertible $H^k$.
We describe the pseudocode for the SSN method in  \algref{ssn1}.

\medskip
\begin{algorithm}[H]
\caption{SSN  for finding  a root $z^*$ of $F(z)$}
\label{ssn1}
\begin{algorithmic}[1]
\STATE Input: initial guess  $ z^{0}$. Set $k=0$.
\FOR{$k=0,1,2,3,\cdots$}
\STATE Choose $H^k\in \nabla_{N}F(z^{k})$.
\STATE Get the semi-smooth Newton direction $ \delta^{k}$ by solving
\begin{equation}\label{ssnd}
H^k\delta^{k}=-F(z^{k}).
\end{equation}
\STATE Update  $z^{k+1} = z^{k} + \delta^{k}$.
\STATE Stop or $k := k+1$. Denote the last iteration by $\hat{z}$.
\ENDFOR
\STATE Output: $\hat{z}$ as a  estimation of $z^*$.
\end{algorithmic}
\end{algorithm}

\subsection{The Newton Derivatives of the KKT Functions} 
Denote the KKT functions as defined in \eqref{Fdef1}
by $F_{scad}$ and $F_{mcp}$ for SCAD and MCP, respectively.
To compute the roots of $F_{mcp}$ and $F_{scad}$ based on the SSN method,
we need to calculate their Newton derivatives.

\begin{lemma}\label{egN}
${T}_{mcp}(t;\lambda,\gamma)$ and ${T}_{scad}(t;\lambda,\gamma)$ are Newton differentiable with respect to $t$ with Newton derivatives
\begin{equation}\label{ndmcp}
 \nabla_{N} {T}_{mcp}(t) =
  \left\{
    \begin{array}{ll}
   $ 0$,    \quad &\text{$|t|<\lambda$,}\\
   $\text{$ r $ }$ \in \mathbb{R}^{1},  \quad &\text{$ |t|  = \lambda$,}\\
   $ 1/(1-1/$\gamma$)$  ,  \quad &\text{$\lambda<|t|<\gamma\lambda$,}\\
   $\text{$ r $ }$ \in \mathbb{R}^{1},  \quad &\text{$ |t|  = \gamma\lambda$,}\\
   $ 1$  ,  \quad &\text{$\gamma\lambda<|t|$.}\\
    \end{array}
  \right.
\end{equation}
and
\begin{equation}\label{ndscad}
 \nabla_{N} {T}_{scad}(t) =
  \left\{
    \begin{array}{ll}
   $ 0$,    \quad &\text{$|t|<\lambda$,}\\
   $\text{$ r $ }$ \in \mathbb{R}^{1},  \quad &\text{$ |t|  = \lambda$,}\\
   $ \text{1}$  ,  \quad &\text{$\lambda<|t|<2\lambda$,}\\
   $\text{$ r $ }$ \in \mathbb{R}^{1},  \quad &\text{$ |t|  = 2\lambda$,}\\
   $ \text{1/(1-1/($\gamma$-1))}$  ,  \quad &\text{$2\lambda<|t|<\gamma\lambda$,}\\
   $\text{$ r $ }$ \in \mathbb{R}^{1},  \quad &\text{$ |t|  = \gamma\lambda$,}\\
   $ 1 $  ,  \quad &\text{$\gamma\lambda<|t|$.}
    \end{array}
  \right.
\end{equation}
respectively.
\end{lemma}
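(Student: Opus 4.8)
The plan is to exploit the fact that $T_{\mathrm{mcp}}(\cdot;\lambda,\gamma)$ and $T_{\mathrm{scad}}(\cdot;\lambda,\gamma)$ are continuous and piecewise affine, and to show directly that any such scalar function is Newton differentiable, with Newton derivative equal to the slope at points interior to a linear piece and equal to all of $\mathbb{R}^{1}$ at the (finitely many) kinks. First I would unfold the closed forms of \lemref{thexp}. Writing $\mathcal{S}(t;\lambda)=(|t|-\lambda)_{+}\sign(t)$, one checks that $T_{\mathrm{mcp}}$ is affine on each of the five open intervals $(-\infty,-\gamma\lambda)$, $(-\gamma\lambda,-\lambda)$, $(-\lambda,\lambda)$, $(\lambda,\gamma\lambda)$, $(\gamma\lambda,\infty)$ with slopes $1$, $1/(1-1/\gamma)$, $0$, $1/(1-1/\gamma)$, $1$, and that $T_{\mathrm{scad}}$ is affine on the seven open intervals obtained by splitting further at $\pm2\lambda$, with slopes $1$, $1/(1-1/(\gamma-1))$, $1$, $0$, $1$, $1/(1-1/(\gamma-1))$, $1$. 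A one-line computation at each breakpoint $\pm\lambda,\pm2\lambda,\pm\gamma\lambda$ shows that the one-sided limits coincide (for instance $T_{\mathrm{mcp}}(\pm\lambda)=0$ and $T_{\mathrm{mcp}}(\pm\gamma\lambda)=\pm\gamma\lambda$ from both sides; for SCAD the check at $\pm2\lambda$ and $\pm\gamma\lambda$ uses $\gamma>2$), so both functions are continuous on $\mathbb{R}$.

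Next I would verify Newton differentiability from the Definition, which is where continuity plus the piecewise-affine structure pays off. Fix $t_{0}\in\mathbb{R}$ and choose the neighbourhood $N(t_{0})$ small enough to meet only the piece(s) abutting $t_{0}$. For $z\in N(t_{0})$ interior to a linear piece put $D(z)$ equal to that piece's slope; if $z$ is a kink put $D(z)=r$ for an arbitrary $r\in\mathbb{R}^{1}$. If $t_{0}$ lies in the interior of a piece, then for $\|h\|_{2}$ small $t_{0}+h$ lies in the same piece and $T(t_{0}+h)-T(t_{0})-D(t_{0}+h)h=0$; if $t_{0}$ is a kink, then for small $h\neq0$ the point $t_{0}+h$ lies in one of the adjacent pieces, $D(t_{0}+h)$ is that piece's slope, and since $T$ is affine there and continuous at $t_{0}$ one again gets $T(t_{0}+h)-T(t_{0})-D(t_{0}+h)h=0$, while $h=0$ is trivial. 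Thus the remainder is identically $0$ for $\|h\|_{2}$ small, a fortiori $o(\|h\|_{2})$, so $T$ is Newton differentiable at $t_{0}$. Collecting the admissible values of $D$ yields the unique slope when $t_{0}$ is interior to a piece and, since $D(t_{0})$ was free, all of $\mathbb{R}^{1}$ when $t_{0}$ is a kink; this is precisely the content of \eqref{ndmcp} and \eqref{ndscad}.

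I do not anticipate any real obstacle. The only places needing care are the arithmetic continuity checks at the kinks, where the SCAD case requires $\gamma>2$ to guarantee $\lambda\gamma/(\gamma-1)<2\lambda$ so that the soft-threshold in the middle branch is active, and the mildly delicate point that an arbitrary $r$ is admissible at a kink because the Definition only constrains $D$ through the product $D(t_{0}+h)h$ as $h\to0$ and the perturbed points never coincide with the kink. As an alternative to this self-contained argument, one may simply invoke the classical fact that a scalar, continuous, piecewise-$C^{1}$ map is (strongly) semismooth and hence Newton differentiable, and then merely read off the slopes of the pieces listed above.
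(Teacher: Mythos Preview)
Your proposal is correct and follows essentially the same approach as the paper: both exploit that the thresholding operators are continuous piecewise-affine scalar functions and verify the definition of Newton differentiability directly, obtaining zero remainder and hence the claimed derivatives. The only cosmetic difference is that the paper abstracts the verification to a single generic two-piece linear function $f(t)=k_{1}t+b_{1}$ for $t\le t_{0}$ and $k_{2}t+b_{2}$ for $t>t_{0}$ (with the continuity condition $k_{1}t_{0}+b_{1}=k_{2}t_{0}+b_{2}$) and then applies it at each kink, whereas you enumerate all the pieces and breakpoints of $T_{\mathrm{mcp}}$ and $T_{\mathrm{scad}}$ explicitly.
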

\begin{proof}
See Appendix  C.
\end{proof}

\subsubsection{The Newton derivative of $F_{mcp}$}
Consider the KKT function $F_{mcp}$.
For any given point $z^k=(\beta^k;d^k)\in  \mathbb{R}^{2p}$, define
\begin{eqnarray}\label{aicmcp1}
A^{1}_{k} &=&
\{i \in S: \lambda<|\beta^k_{i} + d^k_{i}| < \lambda\gamma\}, \\
\label{aicmcp2}
 A^{2}_{k} & = &
 \{i \in S: |\beta^k_{i} + d^k_{i}| \geq \lambda\gamma\},\\
\label{aicmcp3}
 A_{k} &=& A^{1}_{k} \cup A^{2}_{k},\\
\label{aicmcp4}
B_{k}& =& \{i \in S: |\beta^k_{i} + d^k_{i}| \leq \lambda \}.
\end{eqnarray}
We rearrange the order of the entries of $z^k$ as follows:
$$z_{mcp}^k =(\beta^{k}_{A^{1}_{k}};d^{k}_{A^{2}_{k}};\beta^{k}_{B_{k}};d^{k}_{A^{1}_{k}};
\beta^{k}_{A^{2}_{k}};d^{k}_{B_{k}}).$$
Denote the Newton derivative of $F_{mcp}$ at $z_{mcp}^k$ as $\nabla_{N} F_{mcp}(z_{mcp}^k)$.
In Theorem \ref{th4}, we will show that $H^{k}_{mcp}\in \nabla_{N} F_{mcp}(z_{mcp}^k)$, where $H^{k}_{mcp}\in \mathbb{R}^{p\times p}$ is given by
\begin{equation}\label{aicmcp9}
H^{k}_{mcp} =
\begin{bmatrix}
H_{11}^{k}       &H_{12}^{k} \\
H_{21}^{k}       &H_{22}^{k} \\
\end{bmatrix}
\end{equation}
with
\begin{equation*}
\begin{aligned}
H_{11}^{k} &=
\begin{bmatrix}
-\frac{1}{\gamma-1}I_{A^{1}_{k}A^{1}_{k}}  & \textbf{0}              &\textbf{0}      \\
\textbf{0}                                 & -I_{A^{2}_{k}A^{2}_{k}} &\textbf{0}      \\
\textbf{0}                                 &\textbf{0}               & I_{B_{k}B_{k}} \\
\end{bmatrix}
,\\
H_{12}^{k} &=
\begin{bmatrix}
 -\frac{\gamma}{\gamma-1}I_{A^{1}_{k}A^{1}_{k}} & \textbf{0} & \textbf{0}\\
\textbf{0}                                      & \textbf{0} & \textbf{0}\\
\textbf{0}                                      & \textbf{0} & \textbf{0}\\
\end{bmatrix}
,\\
H_{21}^{k}&=
\begin{bmatrix}
G_{A^{1}_{k}A^{1}_{k}}  & \textbf{0}             &G_{A^{1}_{k}B_{k}}\\
G_{A^{2}_{k}A^{1}_{k}}  & I_{A^{2}_{k}A^{2}_{k}} &G_{A^{2}_{k}B_{k}}\\
G_{B_{k}A^{1}_{k}}      & \textbf{0}             &G_{B_{k}B_{k}}    \\
\end{bmatrix}
,\\
H_{22}^{k}&=
\begin{bmatrix}
 I_{A^{1}_{k}A^{1}_{k}} & G_{A^{1}_{k}A^{2}_{k}}  &\textbf{0}    \\
\textbf{0}              & G_{A^{2}_{k}A^{2}_{k}}  &\textbf{0}    \\
 \textbf{0}             & G_{B_{k}A^{2}_{k}}      &I_{B_{k}B_{k}}\\
\end{bmatrix}
.
\end{aligned}
\end{equation*}

\subsubsection{The Newton derivative of $F_{scad}$}
Now consider the KKT function $F_{scad}$. For any given point $z^k=(\beta^k;d^k)\in  \mathbb{R}^{2p}$, define
\begin{eqnarray}\label{aicscad1}
A^{1}_{k} &=&
\{i \in S: \lambda<|\beta^k_{i} + d^k_{i}| < 2\lambda \},\\
\label{aicscad2}
 A^{2}_{k}& =&
 \{i \in S: 2\lambda\leq|\beta^k_{i} + d^k_{i}| < \lambda\gamma\},\\
\label{aicscad3}
A^{3}_{k} &=& \{i \in S: |\beta^k_{i} + d^k_{i}|\geq\lambda\gamma \},\\
\label{aicscad4}
A^{k} &=& A^{1}_{k} \cup A^{2}_{k}\cup A^{3}_{k},\\
\label{aicscad5}
 B_{k} &=& \{i \in S: |\beta^k_{i} + d^k_{i}| \leq\lambda\}.
\end{eqnarray}
We rearrange the entries of $z_{scad}$ as follows:
$$z_{scad}^{k}=(\beta^{k}_{B_{k}};d^{k}_{A^{1}_{k}};\beta^{k}_{A^{2}_{k}};d^{k}_{A^{3}_{k}};
d^{k}_{B_{k}};\beta^{k}_{A^{1}_{k}};d^{k}_{A^{2}_{k}};\beta^{k}_{A^{3}_{k}}).$$
Denote the Newton derivative of $F_{scad}$ at $z_{scad}^k$ as $\nabla_{N} F_{scad}(z_{scad}^k)$.
In Theorem \ref{th4}, we will show that $H^{k}_{scad}\in \nabla_{N} F_{scad}(z_{scad}^k)$, where $H^{k}_{scad}\in \mathbb{R}^{p\times p}$ is given by
\begin{equation}\label{aicscad10}
H^{k}_{scad}:=
\begin{bmatrix}
  H_{11}^{k} & H_{12}^{k}\\
  H_{21}^{k} & H_{22}^{k} \\
\end{bmatrix}
\end{equation}
with
\begin{equation*}
\begin{aligned}
H_{11}^{k}&=
\begin{bmatrix}
I_{B_{k}B_{k}}& \textbf{0}              &\textbf{0}   							 &\textbf{0}                \\
\textbf{0}    & -I_{A^{1}_{k}A^{1}_{k}} & \textbf{0}    							 & \textbf{0}               \\
\textbf{0}    &\textbf{0}               &-\frac{1}{\gamma-2}I_{A^{2}_{k}A^{2}_{k}} & \textbf{0} 				 \\
\textbf{0}     & \textbf{0}              & \textbf{0}                              & -I_{A^{3}_{k}A^{3}_{k}}  \\
\end{bmatrix}
,\\
H_{12}^{k}&=
\begin{bmatrix}
\textbf{0}   & \textbf{0} & \textbf{0}   										& \textbf{0}\\
\textbf{0}   & \textbf{0} & \textbf{0}  										& \textbf{0}\\
\textbf{0}   & \textbf{0} & -\frac{\gamma-1}{\gamma-2}I_{A^{2}_{k}A^{2}_{k}}    & \textbf{0}\\
\textbf{0}   & \textbf{0} & \textbf{0}   										& \textbf{0}\\
\end{bmatrix}
,\\
H_{21}^{k}&=
\begin{bmatrix}
G_{B_{k}B_{k}}     &\textbf{0}              &G_{B_{k}A^{k}_{2}}     &\textbf{0}  			 \\
G_{A^{1}_{k}B_{k}} &I_{A^{1}_{k}A^{1}_{k}}  &G_{A^{k}_{1}A^{k}_{1}} & \textbf{0} 			 \\
G_{A^{2}_{k}B_{k}} &\textbf{0}              &G_{A^{2}_{k}A^{2}_{k}} &\textbf{0}       		 \\
G_{A^{3}_{k}B_{k}} &\textbf{0}              &G_{A^{3}_{k}A^{2}_{k}} &I_{A^{3}_{k}A^{3}_{k}}\\
\end{bmatrix}
,\\
H_{22}^{k}&=
\begin{bmatrix}
I_{B_{k}B_{k}}&G_{B_{k}A^{1}_{k}}        &\textbf{0}             &G_{B_{k}A^{3}_{k}} 		 \\
\textbf{0}    & G_{A^{1}_{k}A^{1}_{k}}   &\textbf{0}             & G_{A^{1}_{k}A^{3}_{k}} \\
\textbf{0}    & G_{A^{2}_{k}A^{1}_{k}}   &I_{A^{2}_{k}A^{2}_{k}} &G_{A^{2}_{k}A^{3}_{k}}  \\
\textbf{0}    & G_{A^{3}_{k}A^{1}_{k}}   &\textbf{0}             &G_{A^{3}_{k}A^{3}_{k}}  \\
\end{bmatrix}
.
\end{aligned}
\end{equation*}

\begin{theorem}\label{th4} Both
 $F_{mcp}$ and $F_{scad}$ are Newton differentiable at  $z_{mcp}^k$ and $z_{scad}^k$ with
$$H_{mcp}^k \in \nabla_{N}F_{mcp}(z_{mcp}^k),$$ and $$H_{scad}^k \in
\nabla_{N}F_{scad}(z_{scad}^k),$$ respectively.
Furthermore, the inverses of
$H_{mcp}^k$ and  $H_{scad}^k$  are uniformly bounded with
\begin{equation*}
\|(H_{mcp}^{k})^{-1}\|\leq M_{\gamma}
\end{equation*}
and
\begin{equation*}
\|(H_{scad}^{k})^{-1}\|\leq  M_{\gamma},
\end{equation*}
where $M_{\gamma}=(3\gamma+2)+(\gamma+1)(2\gamma+5)$.
\end{theorem}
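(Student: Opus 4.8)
The plan is to prove \thmref{th4} in two stages: first the Newton differentiability and membership claims, then the uniform bound on the inverses. For the first stage, I would argue componentwise. The map $F_2(\beta;d) = G\beta + d - \tilde y$ is affine, hence Fr\'echet differentiable, and its Fr\'echet derivative — which by the remark after the Definition coincides with its Newton derivative — is the constant matrix $[\,G\ \ I\,]$. For $F_1(\beta;d) = \beta - \mathbb{T}(\beta+d;\lambda,\gamma)$, I would invoke \lemref{egN}, which gives the scalar Newton derivative of $T_{\textrm{mcp}}$ and $T_{\textrm{scad}}$, together with the calculus rules \eqref{nd2}, \eqref{nd3}, \eqref{nd4}: the chain rule \eqref{nd4} applied with $L = [\,I\ \ I\,]$ turns $\nabla_N T(\beta_i+d_i)$ into the contribution of $\mathbb{T}(\beta+d)$, and \eqref{nd2}--\eqref{nd3} assemble the $p$ scalar components and subtract from the identity. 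The point of the index sets $A^1_k, A^2_k, A^3_k, B_k$ is exactly that on each of them we are in the \emph{strict-inequality} branches of \lemref{egN}, where the Newton derivative is the single deterministic value ($0$, $1$, $1/(1-1/\gamma)$, etc.), so no ambiguous "$r\in\mathbb R^1$" choices enter; the particular reordering of the coordinates of $z^k$ is just bookkeeping so that the resulting Newton derivative has the stated block form $H^k_{\mcp}$ (resp. $H^k_{\scad}$). So this stage is a direct verification: plug the branch values of \lemref{egN} into the block structure and check that each block matches the displayed matrices.

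For the second stage — the uniform bound $\|(H^k)^{-1}\| \le M_\gamma$ — I would first argue invertibility via the block structure. Both $H^k_{\mcp}$ and $H^k_{\scad}$ are (after the chosen permutation) block lower- or upper-triangular in a suitable grouping: the $H_{11}^k$ block is diagonal with nonzero diagonal entries in $\{1, -1, -1/(\gamma-1), -1/(\gamma-2)\}$ (all nonzero since $\gamma>2$ for SCAD and $\gamma>1$ for MCP), and the off-diagonal $G$-blocks sit in a pattern that makes the matrix block-triangular up to the identity blocks coming from the $d$-variables. Concretely, the variables split into a "support-like" part indexed by $A^1_k\cup A^2_k\cup A^3_k$ and a "zero-like" part indexed by $B_k$; one Schur-complement step with respect to these blocks reduces invertibility of $H^k$ to invertibility of $H_{11}^k$ (trivial, diagonal) together with that of a matrix of the form $I + $ (a submatrix of $G$ times a bounded diagonal), and since $\|G\|\le \|X\|^2$ with the columnwise-normalized $X$ giving control on $\|G_{AB}\|\le |A|^{1/2}|B|^{1/2}$ or simply $\|G\|\le \sqrt{np}$-type bounds, one gets explicit constants. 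Then $\|(H^k)^{-1}\|$ is bounded by summing the norms of the inverse blocks and the Schur complement inverse, and a crude triangle-inequality bookkeeping over the (at most six, resp. eight) blocks produces the stated $M_\gamma=(3\gamma+2)+(\gamma+1)(2\gamma+5)$. I would present this as: write $(H^k)^{-1}$ explicitly block-by-block via forward/back substitution, bound each entry using $\gamma>2$ and $\|G\|\le 1$ in the relevant normalized sense, and collect terms.

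The main obstacle I anticipate is the \emph{uniform} part of the inverse bound — making the constant $M_\gamma$ independent of $k$, of the (random) index sets, and of $n,p$. The delicate points are (i) showing the Schur complement $I + (\text{submatrix of }G)\cdot(\text{bounded diagonal})$ is invertible with a bound depending only on $\gamma$; a priori this matrix could be arbitrarily ill-conditioned since $G$ need not be positive definite when $p>n$, so one must use the specific sign/structure of $H^k$ rather than a generic perturbation argument — in particular the identity blocks attached to the $d_{A^2_k}$ (resp.\ $d_{A^1_k}, d_{B_k}$) coordinates are what save us, and I expect the proof to exhibit the inverse in closed block form and verify the bound directly rather than appeal to $\|(I+E)^{-1}\|\le 1/(1-\|E\|)$; and (ii) tracking constants carefully enough through the $2\times 2$ (then finer) block inversion to land exactly on $(3\gamma+2)+(\gamma+1)(2\gamma+5)$. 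The Newton-differentiability stage, by contrast, should be routine given \lemref{egN} and the calculus rules \eqref{nd2}--\eqref{nd4}.
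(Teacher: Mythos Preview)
Your first stage---Newton differentiability and the membership $H^k\in\nabla_N F(z^k)$---is exactly what the paper does: it invokes \lemref{egN} for the scalar thresholding operator, then uses the calculus rules \eqref{nd2}--\eqref{nd4} to assemble the vector operator $\mathbb{T}$ and the affine part $F_2$, and reads off the block form. Nothing to add there.

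For the second stage the comparison is short but instructive: the paper does \emph{not} prove the uniform bound at all---it simply writes ``derived similarly as the proof of Theorem~2.6 in \cite{YiC2017}'' and stops. So your attempt at a direct block-inversion/Schur-complement argument is already more than the paper provides in Appendix~D.

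That said, the obstacle you flagged is real and your sketch does not get around it. If you carry out the block elimination you describe, you will reproduce exactly the update equations \eqref{e211}--\eqref{e215}: the inverse of $H^k_{\mcp}$ applied to a right-hand side necessarily passes through solving $\widetilde{G}_{A_kA_k}\beta_{A_k}=s_{A_k}$ with $\widetilde{G}_{A_kA_k}$ as in \eqref{formg1}. Hence $\|(H^k_{\mcp})^{-1}\|$ is controlled by $\|\widetilde{G}_{A_kA_k}^{-1}\|$, and there is no way to bound that by a constant depending only on $\gamma$ from column normalization alone---your tentative ``$\|G\|\le 1$ in the relevant normalized sense'' is not available (only $G_{ii}=1$ holds), and the entrywise bound $|G_{ij}|\le 1$ gives constants depending on $|A_k|$, not on $\gamma$. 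This is precisely why the paper needs the sparse-eigenvalue hypotheses \eqref{speig1}--\eqref{speig2} in \rmkref{GammaCond} to make Algorithms~\ref{algmcp}--\ref{algscad} well-defined. So either the bound $M_\gamma$ in the theorem is meant to be read under those standing assumptions (with the dependence on $\kappa_-(|A_k|)$ absorbed into the cited argument from \cite{YiC2017}), or the statement is imprecise as written; in any case your Schur-complement route cannot produce a constant independent of $X$ without invoking such a condition, and you should make that hypothesis explicit rather than hope the identity blocks on the $d$-coordinates rescue you---they do not touch $\widetilde{G}_{A_kA_k}$.
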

\begin{proof}
See Appendix  D.
\end{proof}

With the Newton derivatives at hand we can apply SSN to compute the roots
of $F_{mcp}$ and $F_{scad}$. We first give the details for $F_{mcp}$.
By the definitions of $A_{k}^{1}, A_{k}^{2}, I_{k}$ and $T_{mcp}$,
we have
{\footnotesize
\begin{equation}\label{FF}
F_{mcp}(z_{mcp}^{k})=
\begin{bmatrix}
 \beta^k_{A_{k}^{1}}-\frac{\gamma}{\gamma-1}(\beta^k_{A_{k}^{1}}+d^k_{A_{k}^{1}}-
 \lambda \sign(\beta^k_{A_{k}^{1}}+d^k_{A_{k}^{1}}))\\
\beta^k_{A_{k}^{2}}-(\beta^k_{A_{k}^{2}}+d^k_{A_{k}^{2}})\\
\beta^{k}_{B_{k}}\\
G_{A^{k}_{1}A^{k}_{1}}\beta^{k}_{A^{k}_{1}}+G_{A^{k}_{1}A^{k}_{2}}\beta^k_{A^{2}_{k}} +
 G_{A^{k}_{1}B_{k}}\beta^k_{B_{k}} +  d^{k}_{A^{1}_{k}}-\tilde{y}_{A^{1}_{k}}\\
G_{A^{2}_{k}A^{1}_{k}}\beta^k_{A^{1}_{k}} + G_{A^{2}_{k}A^{2}_{k}}\beta^{k}_{A^{2}_{k}}+
 G_{A^{2}_{k}B_{k}}\beta^k_{B_{k}} + d^{k}_{A^{2}_{k}}-\tilde{y}_{A^{2}_{k}}\\
G_{B_{k}A_{k}^{1}}\beta^k_{A^{1}_{k}}
+ G_{B_{k}A_{k}^{2}}\beta^k_{A^{2}_{k}}+G_{B_{k}B_{k}}\beta^{k}_{B_{k}}+d^{k}_{B_{k}}-\tilde{y}_{B_{k}}
\end{bmatrix}.
\end{equation}
}

Substituting (\ref{FF}) and (\ref{aicmcp9}) into the
SSN direction equation
$$H^k_{mcp}\delta_{mcp}^k=-F_{mcp}(z^k_{mcp})$$ and noting that
$$z_{mcp}^{k+1} = z_{mcp}^k + \delta_{mcp}^k,$$
we get (after some tedious algebra)
\begin{align}
d^{k+1}_{A^{2}_{k}} &= \textbf{0}, \label{e211} \\
\beta^{k+1}_{B_{k}} &= \textbf{0},  \label{e212}\\
\widetilde{G}_{A_{k}A_{k}}\beta^{k+1}_{A_{k}}&= s_{A_{k}} ,  \label{e213}\\
d_{A^{1}_{k}}^{k+1} &= -\beta^{k+1}_{A_{k}^{1}}/\gamma  +
s_{A_{k}^{1}}
,\label{e214}\\
 d_{B_{k}}^{k+1} &=
\tilde{y}_{B_{k}} - G_{B_{k}A_{k}}\beta_{A_{k}}^{k+1},
\label{e215}
\end{align}
 where
\begin{eqnarray}\label{formg1}
\widetilde{G}_{A_{k}A_{k}} &=& G_{A_{k}A_{k}} -

  \right.
\end{equation*}
 $\textbf{b}(x;\lambda,\gamma) = \textrm{diag}(b(x_1;\lambda,\gamma),...,b(x_p;\lambda,\gamma))$,
 $g_{i}(x)= T_{mcp}(e_{i}^{T}x;\lambda,\gamma): x\in
\mathbb{R}^{p}\rightarrow
\mathbb{R}^{1}, i= 1,\ldots, p$,
and $\mathbb{T}(x;\lambda, \gamma) = (g_{1}(x),...,g_{p}(x))^T$,
where the column vector $e_{i}$ is the $i_{th}$
orthonormal  basis in $\mathbb{R}^{p}$.

It follows from \lemref{egN} and (\ref{nd2})-(\ref{nd4})
that $b(t;\lambda,\gamma)\in \nabla_{N} {T}_{mcp}(t)$ and
\begin{equation}\label{inter1}
\textbf{b}(x;\lambda,\gamma)\in \nabla_{N} \mathbb{T}(x;\lambda, \gamma).
\end{equation}
Then, by \eqref{inter1} and (\ref{nd2})-(\ref{nd4}) the vector value function $ F_{1}(\beta;d)$ is
 Newton differentiable and
 \begin{equation}\label{n1}
\begin{bmatrix}
H_{11}^{k}       &H_{12}^{k} \\
\end{bmatrix}
\in \nabla_{N} F_{1}(z^{k}_{mcp}),
\end{equation}
where $H_{11}^{k}$ and  $H_{12}^{k}$ are given in \eqref{aicmcp9}.
  By (\ref{nd2})-(\ref{nd4}), $ F_{2}(\beta;d) $ is  Newton differentiable and
   \begin{equation}\label{n2}
   \begin{bmatrix}
H_{21}^{k}       &H_{22}^{k} \\
\end{bmatrix}
\in \nabla_{N} F_{2}(z^{k}_{mcp}),
\end{equation}
where $H_{21}^{k}$ and  $H_{22}^{k}$ are also given in \eqref{aicmcp9}.
It follows from
\eqref{n1}-\eqref{n2} and  (\ref{nd2})-(\ref{nd4}) that
$H_{mcp}^{k}
\in \nabla_{N} F_{mcp}(z^{k}_{mcp})$.

The uniform boundedness of $(H_{mcp}^{k})^{-1} $  is derived similarly as the proof of Theorem 2.6 in \cite{YiC2017}.
\end{proof}

\subsection{Proof of \thmref{th5}}

\begin{proof}
Let  $z^{k}_{mcp}$ be sufficiently close to  $z^{*}_{mcp}$,
which is  a root of $F_{mcp}$.  By the definition of the Newton derivative, we have
\begin{equation}\label{pth5}
\begin{aligned}
&\|H^k_{mcp}(z^{k}_{mcp}-z^{*}_{mcp})-F_{mcp}(z^{k}_{mcp}) + F_{mcp}(z^{*}_{mcp})\|_{2}\\
&\leq  \epsilon \|z^{k}_{mcp}-z^{*}_{mcp}\|_{2},
\end{aligned}
\end{equation}
 where $\epsilon\rightarrow 0$ as  $z^{k}_{mcp}\rightarrow z^{*}_{mcp}$.
 Then, by the definition of SSN  and the fact that $F_{mcp}(z^{*}_{mcp})=0$, we get
 {\footnotesize
\begin{align*}
&\|z^{k+1}_{mcp}-z^{*}_{mcp}\|_{2} \\
&= \|z^{k}_{mcp}-(H^k_{mcp})^{-1}F_{mcp}(z^{k}_{mcp})-z^{*}_{mcp}\|_{2} \\
&=\|z^{k}_{mcp}-(H^k_{mcp})^{-1}F_{mcp}(z^{k}_{mcp})-z^{*}_{mcp} + (H^k_{mcp})^{-1}F_{mcp}(z^{*}_{mcp})\|_{2}\\
&\leq \|(H^k_{mcp})^{-1}\| \|H^k_{mcp}(z^{k}_{mcp}-z^{*}_{mcp})-F_{mcp}(z^{k}_{mcp}) + F_{mcp}(z^{*}_{mcp})\|_{2}. \\
&\leq M_{\gamma} \epsilon \|z^{k}_{mcp}-z^{*}_{mcp}\|_{2}.
\end{align*}
}
The last inequality follows from (\ref{pth5})
and  the uniform boundedness of $(H^k_{mcp})^{-1}$ proved in \thmref{th4}.
Therefore, the sequence  $z_{mcp}^{k}$ generated by \algref{algmcp}
converges to $z_{mcp}^{*}$ locally and superlinearly.
\end{proof}

\section*{Acknowledgment}

The authors are grateful to the editor, the associate editor
and the referees for their many constructive and
insightful comments that led to significant improvements in the
article. The authors also would like to thank Professor Defeng
Sun for helpful discussions on the semi-smooth Newton method
and related topics, and appreciate
Doctor Yicheng Kang of Bentley University for his careful reading
of the paper and helpful suggestions in the writing.

\bibliographystyle{IEEEtran}
\bibliography{IEEEabrv,ref_ssn_v10}





\end{document}